\PassOptionsToPackage{dvipsnames}{xcolor}

\documentclass[prb,preprint,onecolumn]{revtex4-1} 

\usepackage{amsmath}
\usepackage{amsfonts}
\usepackage{graphicx}
\usepackage{esvect}
\usepackage{mathtools}
\usepackage{subcaption}
\usepackage{xcolor}
\usepackage{appendix}

\usepackage{siunitx} 
\usepackage{placeins}

\usepackage{float}

\usepackage{pbox}

\usepackage{url}

\usepackage{xcolor}
\pagecolor{white}

\def\Z{\mathbb{Z}}

\newcounter{thmcount}
\newenvironment{theorem}[1][]{
  \refstepcounter{thmcount}
  \par\medskip\noindent
  \textbf{Theorem \thethmcount.}
  \ifx\relax#1\relax\else\ (\textit{#1})\fi\ \itshape
}{
  \par\medskip
}

\newenvironment{proof}{
  \par\medskip\noindent
  \textit{Proof.}
  \enspace
}{
  \hfill\ensuremath{\square}
  \par\medskip
}

\newcounter{remcount}
\newenvironment{remark}[1][]{
  \refstepcounter{remcount}
  \par\medskip\noindent
  \textit{Remark \theremcount.}
  \if\relax\detokenize{#1}\relax\else\ (\textit{#1})\fi\
}{
  \par\medskip
}

\newcommand{\Norm}{\text{Norm}}

\newcommand{\Zi}{\mathbb{Z}[i]}

\newcommand{\lcm}{\text{lcm}}

\begin{document}

\title{Transitions with the same energy difference in the Bohr model of the hydrogen atom}

\author{Matthias Reinsch}
\email{mreinsch@sdsu.edu}
\affiliation{Department of Physics, San Diego State University, San Diego, CA 92182, USA}

\begin{abstract}

In the Bohr model of the hydrogen atom~\cite{bohr1913},
the energy levels are a negative constant divided by the square of the level number.
It is well known that special pairs of transitions exist that have the same energy difference, and a systematic treatment of this is given in the paper by 
Do and Phan~\cite{DoPhan2022}.

In this paper we describe a simple method (using equal norms of Gaussian integers, and the Brahmagupta--Fibonacci identity)
for finding all such pairs of transitions. We also analyze cascades of equal-frequency transitions, and use a theorem due to Fermat to show
that cascades with more than three levels are not possible.

We conclude the paper by going beyond Bohr's 1913 model, and analyzing some Diophantine equations related to the
nonrelativistic Schr\"odinger Equation for hydrogen, some of which are similar to the Diophantine equations studied in the main part of the paper.

\end{abstract}

\date{\today}

\maketitle

\section{Introduction}

The Bohr energy levels are the subject of the main part of this paper, not the energy levels of more complicated theories that include relativistic and other effects.
As is widely recognized, it is of interest to look at the relationship between mathematics and physics in many different contexts. The physics that we discuss in
the main part of this paper is that of Bohr's 1913 model.
The same energy levels are gotten from the nonrelativistic Schr\"odinger Equation, discussed in Section~\ref{sec_SE},
and in that section we analyze further Diophantine equations that involve the angular momentum quantum number.
Some of those Diophantine equations are similar to the Diophantine equations studied in the main part of the paper.

Do and Phan~\cite{DoPhan2022} showed how all equal-frequency transition pairs can be obtained from solutions of a
Diophantine equation, using a rational parametrization of an associated conic.
Another excellent reference is ``Question 06/00: Hydrogen Atom'' on Yuval Kantor's problem page~\cite{KantorQuestion2000}.
The corresponding answer page records contributions by several respondents,
including Kaplan and Liu, Ussishkin, Sundaram, and Perondi~\cite{KantorAnswer2000}.

\section{A Diophantus--Brahmagupta--Fibonacci Parametrization of Equal-Frequency Hydrogen Transitions}

It is customary to use the term ``Brahmagupta--Fibonacci identity'' for an identity first proved by Diophantus of Alexandria. A better term would be
``Diophantus--Brahmagupta--Fibonacci identity,'' but this is cumbersome, so we will use the shorter form of the name for the identity.

\begin{figure}[!htb]
\centering
\includegraphics[width=0.7\textwidth]{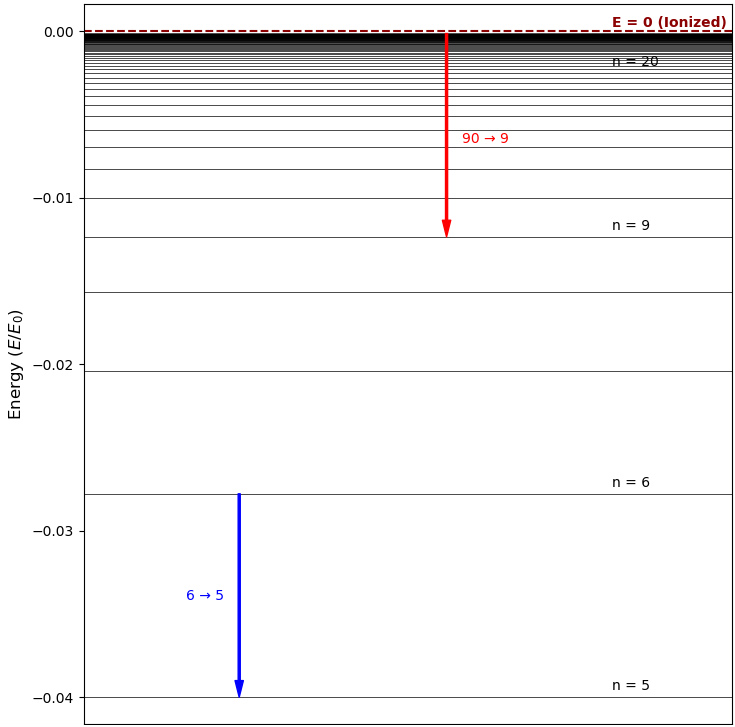}
\caption{Example: The two transitions shown produce the same photon frequency because 1/25 - 1/36 = 1/81 - 1/8100.}
\label{fig1}
\end{figure}

In the Bohr model of the hydrogen atom, the energy levels are proportional to
\[
   -\frac{1}{n^2}.
\]
(An inverse-square difference structure is present in Balmer's
empirical formula for the visible hydrogen spectrum~\cite{Balmer1885}, and
Rydberg~\cite{Rydberg1890} later generalized this structure to a broader spectral formula.
However, we cite Bohr's 1913 model as the central reference because it introduced the concept of
atomic energy levels, and we are interested in relationships between mathematics and physical theory.)
Two transitions
\[
   N_1\to n_1,
   \qquad
   N_2\to n_2,
   \qquad
   N_1>n_1,
   \quad
   N_2>n_2,
\]
produce photons with the same frequency precisely when
\begin{equation}\label{eq:freq}
   \frac{1}{n_1^2}-\frac{1}{N_1^2}
   =
   \frac{1}{n_2^2}-\frac{1}{N_2^2}.
\end{equation}
Equivalently,
\begin{equation}\label{eq:reciprocal-sum}
   \frac{1}{n_1^2}+\frac{1}{N_2^2}
   =
   \frac{1}{N_1^2}+\frac{1}{n_2^2}.
\end{equation}
Thus we are interested in
equalities between two sums of
reciprocal squares.
(Note that cases of equal-frequency transition pairs are mathematically equivalent to
degeneracies for the problem of two non-interacting hydrogen atoms in the Bohr model.)

One of the main objectives of this paper is to record a simple parametrization, based on the
Brahmagupta--Fibonacci identity, and to explain why it generates all
positive-integer equal-frequency transition pairs.

\subsection{The introduction of four parameters}

For integers $r,s,u,v$, the Brahmagupta--Fibonacci identity says
\begin{equation}\label{eq:bf}
   (r^2+s^2)(u^2+v^2)
   =
   (ru-sv)^2+(rv+su)^2
   =
   (ru+sv)^2+(su-rv)^2.
\end{equation}
Equivalently, in Gaussian-integer language~\cite{NivenZuckermanMontgomery1991, ConradGaussianIntegers},
\[
   (r+is)(u+iv)=(ru-sv)+i(rv+su),
\]
whereas
\[
   (r+is)(u-iv)=(ru+sv)+i(su-rv).
\]
Both Gaussian integers have the same norm, namely
\[
   (r^2+s^2)(u^2+v^2).
\]

Thus every choice of $r,s,u,v$ gives an identity of the form
\begin{equation}\label{eq:sum-two-squares}
   A^2+B^2=C^2+D^2,
\end{equation}
where
\begin{equation}\label{eq:ABCD}
   A=ru-sv,
   \qquad
   B=rv+su,
   \qquad
   C=ru+sv,
   \qquad
   D=su-rv.
\end{equation}
Signs and order are not essential; for positive-integer applications one may
take absolute values and interchange the two entries in either pair.

\begin{figure}[!htb]
\centering
\includegraphics[width=0.7\textwidth]{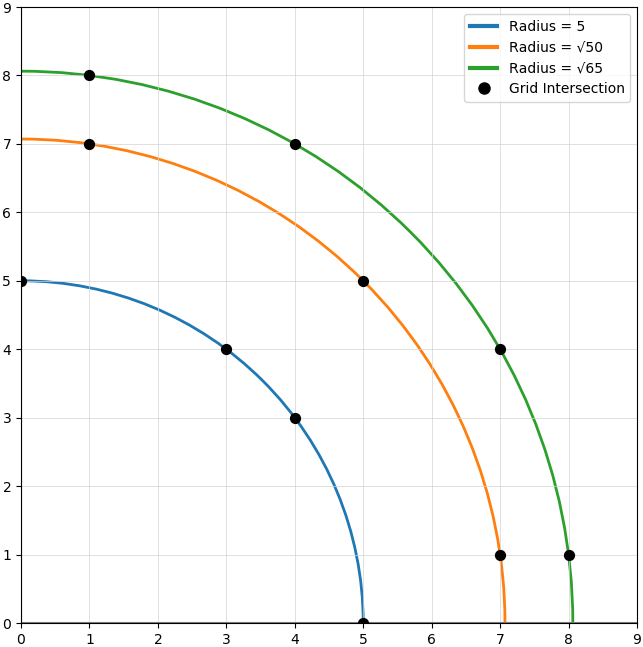}
\caption{Examples of equal sums of two squares.
As explained in the text, $1^2 + 8^2 = 4^2 + 7^2$ gives us $1/7^2 - 1/8^2 = 1/14^2 - 1/56^2$, and
$1^2 + 7^2 = 5^2 + 5^2$ gives us $1/5^2 - 1/7^2 = 1/7^2 - 1/35^2$.
Although $3^2 + 4^2 = 5^2 + 0^2$ is not a case with all positive integers, it is interesting to observe that it gives us $1/20^2 - 1/\infty^2 = 1/12^2 - 1/15^2$,
where we allow the $\infty$ to denote an $E = 0$ ionized state.}
\label{fig2}
\end{figure}

\subsection{All equal sums of two squares arise this way}

\begin{theorem}\label{thm:all-sums}
Every integer solution of
\[
   A^2+B^2=C^2+D^2
\]
arises, up to signs and order, from the four-parameter parametrization
\eqref{eq:ABCD}.
\end{theorem}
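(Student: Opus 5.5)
The plan is to invert the parametrization \eqref{eq:ABCD} linearly and reduce the problem to the classical ``four-number lemma'': if $ab=cd$ in $\Z$, then there are integers $r,s,u,v$ with $a=ru$, $b=sv$, $c=su$, $d=rv$. Observe that \eqref{eq:ABCD} gives
\[
   C+A=2ru,\qquad C-A=2sv,\qquad B+D=2su,\qquad B-D=2rv .
\]
So, given a solution of $A^2+B^2=C^2+D^2$, I would set
\[
   a=\tfrac{A+C}{2},\quad b=\tfrac{C-A}{2},\quad c=\tfrac{B+D}{2},\quad d=\tfrac{B-D}{2},
\]
and then find $r,s,u,v$ with $ru=a$, $sv=b$, $su=c$, $rv=d$.

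\textbf{Step 1 (parity, using the freedom of order).} Reducing $A^2+B^2=C^2+D^2$ modulo $2$ gives $A+B\equiv C+D \pmod 2$. If $A\equiv C$, then also $B\equiv D$. Otherwise $A\equiv D$ and $B\equiv C$, and I interchange $C$ and $D$, which the statement allows. After this normalization, $A\pm C$ and $B\pm D$ are all even, so $a,b,c,d$ are integers. I expect this to be the only genuinely delicate point. Without the swap the claim can fail: for example, $1^2+8^2=4^2+7^2$ must be read with $C=7$, $D=4$.

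\textbf{Step 2 (the product relation).} Rewrite the equation as $(C-A)(C+A)=(B-D)(B+D)$. Up to the factor $4$ this is $ab=cd$.

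\textbf{Step 3 (four-number lemma).} If $a=d=0$, take $r=0$, $s=1$, $u=c$, $v=b$. Otherwise put $r=\gcd(a,d)>0$ and write $a=ru$, $d=rv$ with $\gcd(u,v)=1$. Then $ab=cd$ becomes $ub=cv$. Since $\gcd(u,v)=1$, $u$ divides $c$. If $u\neq0$, write $c=su$ and cancel $u$ to get $b=sv$. If $u=0$, then $v=\pm1$ because $\gcd(0,v)=1$, and $s=\pm b$ works for any $c$ once we note that $u=0$ forces $c v\cdot$ consistency: indeed $ub=cv$ gives $c=0$, so $c=su$ holds for any $s$, and we choose $s=vb$ so that $sv=b$.

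\textbf{Step 4 (recovery).} Substituting back gives $A=a-b=ru-sv$, $C=a+b=ru+sv$, $B=c+d=su+rv$ and $D=c-d=su-rv$. This is exactly \eqref{eq:ABCD}, up to the interchange made in Step 1. Signs need no separate treatment, since the argument works for arbitrary integers.
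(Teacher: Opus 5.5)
Your overall strategy works, but Step 1, the point you flagged as delicate, contains an invalid inference. From $A+B\equiv C+D \pmod{2}$ and $A\not\equiv C \pmod{2}$ it does not follow that $A\equiv D$ and $B\equiv C$. The parity patterns $(A,B,C,D)\equiv(1,1,0,0)$ and $(0,0,1,1)$ satisfy the mod-$2$ congruence, yet in them $A$ agrees in parity with neither $C$ nor $D$. In such a case no ordering of $C,D$ would make $a,b,c,d$ integers.

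These patterns are ruled out only modulo $4$. Since $x^2\equiv 0$ or $1 \pmod{4}$ according as $x$ is even or odd, $A^2+B^2 \bmod 4$ equals the number of odd entries among $A,B$. Hence $\{A,B\}$ and $\{C,D\}$ contain equally many odd integers, and this gives the dichotomy you need.

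With that one-line repair the proof is complete. Step 2 is right. Your four-number lemma is correctly proved, including the degenerate cases; the sentence on $u=0$ is garbled, but its content ($c=0$, $v=\pm1$, $s=vb$) is correct. Step 4 recovers \eqref{eq:ABCD} exactly.

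Once repaired, your argument takes a genuinely different route from the paper's.
\begin{itemize}
\item The paper works in $\mathbb{Z}[i]$. It sets $z=A+iB$ and $w=C+iD$, divides out $\delta=\gcd(z,w)$, uses unique factorization to get $w_0=\varepsilon\overline{z_0}$ for a unit $\varepsilon$, and reads off $\delta=r+is$, $z_0=u+iv$.
\item You stay entirely in $\mathbb{Z}$. The half-sum/half-difference substitution turns the equation into $(C-A)(C+A)=(B-D)(B+D)$, and only gcds in $\mathbb{Z}$ are needed.
\end{itemize}
Your parity normalization is the counterpart of the paper's unit. If $A\not\equiv C \pmod{2}$, the paper's $\varepsilon$ is necessarily $\pm i$, and the paper absorbs this into ``signs and order,'' whereas you must handle it by hand.

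In exchange, your version is more elementary and gives explicit formulas such as $r=\gcd(a,d)$. It also proves a slightly sharper statement: no sign changes are ever needed, only possibly the interchange of $C$ and $D$. The paper's version is shorter and explains the parametrization conceptually, as the classification of pairs of Gaussian integers of equal norm. That viewpoint is also what makes the non-uniqueness in the subsequent remark transparent.
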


\begin{proof}
Define
\[
   z=A+iB,
   \qquad
   w=C+iD.
\]
The equation $A^2+B^2=C^2+D^2$ is the statement
\[
  \Norm(z)=\Norm(w),
\]
where $\Norm(x+iy)=x^2+y^2$ is the Gaussian-integer norm (the number-theoretic norm, which is the square of
the usual complex modulus)~\cite{NivenZuckermanMontgomery1991, ConradGaussianIntegers, IrelandRosen1990}.

The Gaussian integers $\Zi$ form a unique factorization domain (see, for example, Conrad~\cite{ConradGaussianIntegers} or
Niven, Zuckerman, and Montgomery~\cite{NivenZuckermanMontgomery1991}). Let $\delta$
be a greatest common divisor of $z$ and $w$ in $\Zi$. We define $z_0$ and $w_0$ so that
\[
   z=\delta z_0,
   \qquad
   w=\delta w_0.
\]
Since $\Norm(z)=\Norm(w)$, we also have
\[
   \Norm(z_0)=\Norm(w_0).
\]
Because $z_0$ and $w_0$ are coprime in \(\mathbb Z[i]\)
and satisfy $z_0\overline{z_0}=w_0\overline{w_0}$, unique factorization implies that 
\(z_0\) divides \(\overline{w_0}\), so the
prime factors of $w_0$ are, up to units, the conjugates of the prime factors of
$z_0$. Therefore
\[
   w_0=\varepsilon\,\overline{z_0}
\]
for some unit $\varepsilon\in\{\pm1,\pm i\}$.

Absorbing this unit into signs and order, we write
\[
   \delta=r+is,
   \qquad
   z_0=u+iv.
\]
Then
\[
   z=(r+is)(u+iv)
\]
and
\[
   w=(r+is)(u-iv),
\]
again up to multiplication by one of the units $\{\pm1,\pm i\}$. Expanding gives exactly the formulas
\eqref{eq:ABCD}.
\end{proof}

\begin{remark}
This is the analogue, for the equation
\[
   A^2+B^2=C^2+D^2,
\]
of the familiar parametrization of primitive Pythagorean triples. The main
difference is that this four-parameter parametrization is not unique, because
an integer may have several distinct factorizations as a product of two Gaussian
integers [for example, $65=(1+8i)(1-8i)=(4+7i)(4-7i)$,
while a Gaussian-prime factorization is $65=(2+i)(2-i)(3+2i)(3-2i)$]
\end{remark}

\subsection{From equal sums of squares to reciprocal-square identities}

Suppose
\[
   A^2+B^2=C^2+D^2
\]
with $A,B,C,D$ positive. Let $L$ be any common multiple of
$A,B,C,D$. Then
\[
   \left(\frac{A}{L}\right)^2+\left(\frac{B}{L}\right)^2
   =
   \left(\frac{C}{L}\right)^2+\left(\frac{D}{L}\right)^2.
\]
Equivalently,
\begin{equation}\label{eq:reciprocal-from-squares}
   \frac{1}{(L/A)^2}+\frac{1}{(L/B)^2}
   =
   \frac{1}{(L/C)^2}+\frac{1}{(L/D)^2}.
\end{equation}
Thus every case of equal sums of two squares gives an equality of two sums of
reciprocal squares.
Some examples are shown in Figure~\ref{fig2}. The dots on the blue curve have been known for at least 3500 years~\cite{Mansfield2021Si427}.

Conversely, suppose
\[
   \frac{1}{a^2}+\frac{1}{b^2}
   =
   \frac{1}{c^2}+\frac{1}{d^2}.
\]
Let $L$ be a common multiple of $a,b,c,d$, and define
\[
   A=\frac{L}{a},
   \qquad
   B=\frac{L}{b},
   \qquad
   C=\frac{L}{c},
   \qquad
   D=\frac{L}{d}.
\]
Then
\[
   A^2+B^2=C^2+D^2.
\]
Therefore reciprocal-square identities and equal-sum-of-two-squares identities
are the same problem after clearing denominators.

\subsection{All equal-frequency pairs are generated}

\begin{theorem}\label{thm:all-equifrequency}
Every positive-integer equal-frequency transition pair
\[
   N_1\to n_1,
   \qquad
   N_2\to n_2,
\]
which, per definition, satisfies
\[
   \frac{1}{n_1^2}-\frac{1}{N_1^2}
   =
   \frac{1}{n_2^2}-\frac{1}{N_2^2}
\]
is obtained from the Brahmagupta--Fibonacci parametrization
\eqref{eq:ABCD}, after clearing denominators and relabeling terms.
\end{theorem}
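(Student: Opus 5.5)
The plan is to chain together the reformulations already made. First, rewrite the defining equation \eqref{eq:freq} in the equivalent form \eqref{eq:reciprocal-sum},
\[
   \frac{1}{n_1^2}+\frac{1}{N_2^2}=\frac{1}{N_1^2}+\frac{1}{n_2^2}.
\]
This is an equality of two sums of reciprocal squares with all four denominators positive integers.

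Next, clear denominators as in the converse direction of the preceding subsection. Take $L=\lcm(n_1,N_1,n_2,N_2)$, or any common multiple, and set
\[
   A=\frac{L}{n_1},\qquad B=\frac{L}{N_2},\qquad C=\frac{L}{N_1},\qquad D=\frac{L}{n_2}.
\]
These are positive integers satisfying $A^2+B^2=C^2+D^2$.

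Then apply Theorem~\ref{thm:all-sums}. It gives integers $r,s,u,v$ such that $(A,B,C,D)$ agrees with the expressions \eqref{eq:ABCD} up to signs and order. The order may involve swapping $A\leftrightarrow B$, swapping $C\leftrightarrow D$, or exchanging the two pairs. Finally, reverse the clearing step. Starting from the parametrized quadruple, take absolute values and apply the same relabeling. Dividing the common multiple $L$ by each entry returns exactly $n_1,N_2,N_1,n_2$, which recovers the original transition pair via \eqref{eq:reciprocal-from-squares}.

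The only point needing care is the bookkeeping of ``up to signs and order.'' In particular, we must check that no entry of the parametrization vanishes in the relevant representation. This holds automatically here: $A,B,C,D$ are positive because the $n_j,N_j$ are finite. So whatever representative Theorem~\ref{thm:all-sums} supplies, its absolute values form a permutation of $(A,B,C,D)$ that respects the pairing $\{A,B\}$, $\{C,D\}$. The pairing is preserved because $z=A+iB$ and $w=C+iD$ are built from the two sides of the equation, and multiplying by a unit only permutes and negates the coordinates within a pair.

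It also matters which terms sit on which side. The inequalities $N_1>n_1$ and $N_2>n_2$ are then a property of the original pair, not something the parametrization must enforce. They only determine how the relabeling assigns the four reciprocals to upper and lower levels. In particular, $1/n_1^2$ and $1/N_2^2$ are on the same side of \eqref{eq:reciprocal-sum}, so they correspond to the same Gaussian integer $z$.

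I expect no real obstacle. The step most likely to need explicit comment is confirming that the ``relabeling terms'' allowed in the statement is exactly the unit-and-conjugation ambiguity already absorbed in Theorem~\ref{thm:all-sums}, together with the choice of which pair is called $z$.
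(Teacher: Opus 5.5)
Your proposal is correct and follows the paper's proof. You rewrite \eqref{eq:freq} as \eqref{eq:reciprocal-sum}, clear denominators with $A=L/n_1$, $B=L/N_2$, $C=L/N_1$, $D=L/n_2$, and invoke Theorem~\ref{thm:all-sums}; your bookkeeping of signs, order, and the pairing $\{A,B\}$, $\{C,D\}$ is more explicit than the paper's. The paper also proves the converse (any parametrized identity with $A\neq C$ and $B\neq D$ gives a valid transition pair once labeled so that $A>C$ and $D>B$), but the statement as phrased does not require it.
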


\begin{proof}
Starting from an equal-frequency pair, rewrite the equation as
\[
   \frac{1}{n_1^2}+\frac{1}{N_2^2}
   =
   \frac{1}{N_1^2}+\frac{1}{n_2^2}.
\]
Let $L$ be a common multiple of $n_1,N_1,n_2,N_2$, and define
\[
   A=\frac{L}{n_1},
   \qquad
   B=\frac{L}{N_2},
   \qquad
   C=\frac{L}{N_1},
   \qquad
   D=\frac{L}{n_2}.
\]
Then
\[
   A^2+B^2=C^2+D^2.
\]
By Theorem~\ref{thm:all-sums}, the integers $A,B,C,D$ arise from the
Brahmagupta--Fibonacci parametrization, up to signs and order. Hence the
original equal-frequency transition pair arises from the parametrization after
clearing denominators.

Conversely, start with any identity
\[
   A^2+B^2=C^2+D^2
\]
coming from \eqref{eq:ABCD}. If $A \ne C$ and $B \ne D$, then the terms can be labeled so that
\[
   A>C,
   \qquad
   D>B,
\]
then choose any common multiple $L$ of $A,B,C,D$ and define
\[
   n_1=\frac{L}{A},
   \qquad
   N_1=\frac{L}{C},
   \qquad
   n_2=\frac{L}{D},
   \qquad
   N_2=\frac{L}{B}.
\]
The inequalities $A>C$ and $D>B$ imply $N_1>n_1$ and $N_2>n_2$, and we have
\[
   \frac{1}{n_1^2}-\frac{1}{N_1^2}
   =
   \frac{A^2-C^2}{L^2},
\]
and
\[
   \frac{1}{n_2^2}-\frac{1}{N_2^2}
   =
   \frac{D^2-B^2}{L^2}.
\]
Since $A^2+B^2=C^2+D^2$, we have
\[
   A^2-C^2=D^2-B^2.
\]
Therefore the two transition frequencies are equal.
\end{proof}

Of course, any two Pythagorean triples can be used to construct an equal-frequency
example.

\subsection{Example: the transitions 11 to 10 and 55 to 22}

Choose
\[
   r=4,
   \qquad
   s=3,
   \qquad
   u=2,
   \qquad
   v=1.
\]
Then
\[
   ru-sv=5,
   \qquad
   rv+su=10,
\]
and
\[
   ru+sv=11,
   \qquad
   su-rv=2.
\]
Thus
\[
   5^2+10^2=11^2+2^2=125.
\]
Taking $L=110$, we obtain
\[
   \frac{1}{10^2}+\frac{1}{55^2}
   =
   \frac{1}{11^2}+\frac{1}{22^2}.
\]
Moving terms gives the equal-frequency transition identity
\[
   \boxed{
   \frac{1}{10^2}-\frac{1}{11^2}
   =
   \frac{1}{22^2}-\frac{1}{55^2}}
\]
corresponding to
\[
   \boxed{11\to 10,\qquad 55\to 22.}
\]

\subsection{Example: the cascade 35 to 7 and 7 to 5}

A two-step cascade is a special case in which the upper level of one transition
is the lower level of the other. The smallest nontrivial example is
\[
   35\to7,
   \qquad
   7\to5.
\]
This comes from
\[
   7^2+1^2=5^2+5^2=50.
\]
For example, this identity is produced by the Brahmagupta--Fibonacci
parameters
\[
   r=2,
   \qquad
   s=1,
   \qquad
   u=3,
   \qquad
   v=-1.
\]
Taking $L=35$, we get
\[
   \frac{1}{5^2}+\frac{1}{35^2}
   =
   \frac{1}{7^2}+\frac{1}{7^2}.
\]
Equivalently,
\[
   \boxed{
   \frac{1}{5^2}-\frac{1}{7^2}
   =
   \frac{1}{7^2}-\frac{1}{35^2}.}
\]

\subsection{Primitive normalization}

For many purposes one wants a primitive identity, meaning
\[
   \gcd(A,B,C,D)=1.
\]
Starting from the four numbers in \eqref{eq:ABCD}, one may always divide by
their common gcd. If $r+is$ and $u+iv$ are primitive Gaussian integers, meaning
\[
   \gcd(r,s)=1,
   \qquad
   \gcd(u,v)=1,
\]
then the only unavoidable common factor is a possible factor of $2$, occurring
when all four of $r,s,u,v$ are odd. In that case all four expressions in
\eqref{eq:ABCD} are even. Dividing by the common gcd gives the primitive
version.

The equal-frequency transition levels obtained after clearing denominators may
also have a common factor. Dividing all four levels by their common gcd leaves
the equality of transition frequencies unchanged.

\subsection{Summary}

The main points are:
\begin{enumerate}
   \item Equal-frequency transitions are equivalent to equalities of
   two sums of reciprocal squares.

   \item After clearing denominators, equalities of reciprocal-square sums are
   exactly identities
   \[
      A^2+B^2=C^2+D^2.
   \]

   \item Every identity $A^2+B^2=C^2+D^2$ arises from the
   Brahmagupta--Fibonacci parametrization
   \[
      (ru-sv)^2+(rv+su)^2=(ru+sv)^2+(su-rv)^2.
   \]

   \item Therefore the four-parameter Brahmagupta--Fibonacci construction
   generates every positive-integer equal-frequency transition pair, at least
   once, after clearing denominators and relabeling terms.

   \item In Gaussian-integer language, the classification is simply the
   classification of pairs of Gaussian integers with equal norm.
\end{enumerate}

\section{Cascades of Equal-Frequency Transitions}

We study cascades of the following form:
\begin{equation}
    x_0<x_1<x_2,
\end{equation}
with
\begin{equation}
    \frac{1}{x_0^2}-\frac{1}{x_1^2}
    =
    \frac{1}{x_1^2}-\frac{1}{x_2^2}.
\end{equation}
Equivalently,
\begin{equation}
    \frac{1}{x_0^2}+\frac{1}{x_2^2}
    =
    \frac{2}{x_1^2}.
\end{equation}
(Longer cascades of equal-frequency transitions do not occur. This can be proved using a theorem due to Fermat: ``There do not exist four distinct integer squares in arithmetic progression.''
Equivalently, there do not exist four distinct rational squares in arithmetic progression.)
We will see that these length-$2$ cascades naturally correspond to Pythagorean triples.

\begin{figure}[!htb]
\centering
\includegraphics[width=0.7\textwidth]{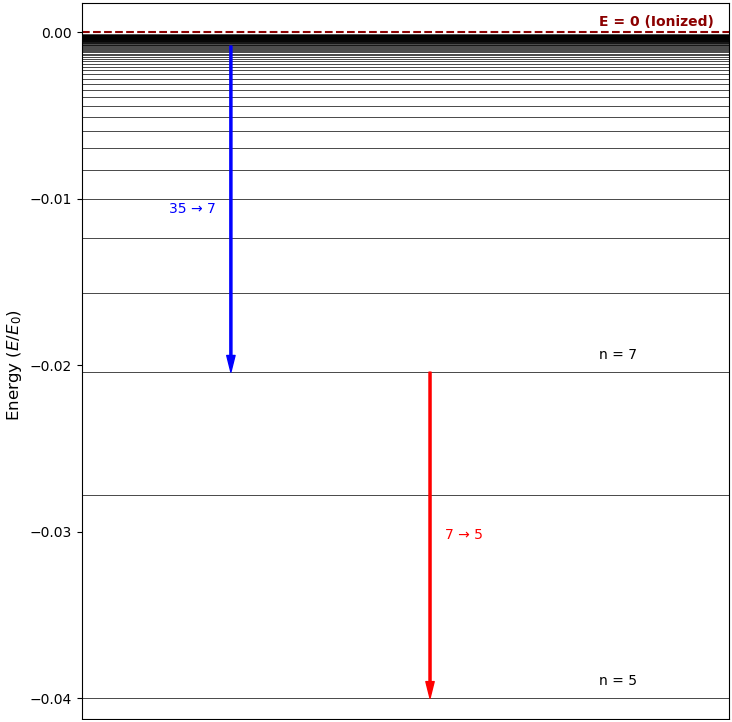}
\caption{An example of a cascade with equal-frequency transitions}
\label{fig3}
\end{figure}

Let $L$ be a common multiple of $x_0,x_1,x_2$, and define
\begin{equation}
    A=\frac{L}{x_0},
    \qquad
    B=\frac{L}{x_1},
    \qquad
    C=\frac{L}{x_2}.
\end{equation}
Since
\[
    x_0<x_1<x_2,
\]
we have
\[
    A>B>C>0.
\]
The cascade condition becomes
\begin{equation}
    A^2-B^2=B^2-C^2,
\end{equation}
or
\begin{equation}
    A^2+C^2=2B^2.
    \label{eq:three-squares}
\end{equation}
Thus length-$2$ cascades are equivalent to three integer squares
\[
    A^2,\;B^2,\;C^2
\]
in arithmetic progression.

\subsection{Parametrization of all length-$2$ cascades}

All primitive integer solutions of
\[
    A^2+C^2=2B^2
\]
come from the standard parametrization of three squares in arithmetic progression.
Choose integers
\[
    m>n>0.
\]
Define
\begin{align}
    A_0 &= m^2+2mn-n^2, \\
    B_0 &= m^2+n^2, \\
    C_0 &= \left|m^2-2mn-n^2\right|.
\end{align}
Then
\begin{equation}
    A_0^2+C_0^2=2B_0^2.
\end{equation}
For non-primitive solutions, an overall multiplier can be incorporated.

For details on this standard parametrization, see
Conrad~\cite{ConradThreeSquaresAP} and Brown, Freedman, and
Shiue~\cite{BrownFreedmanShiue2004}.

\subsection{Relation to Pythagorean triples and rational points on the unit circle}

Equation~\eqref{eq:three-squares} makes it possible to define a rational point on the unit circle. If
\[
    A^2+C^2=2B^2,
\]
define
\begin{equation}
    z=\frac{A+iC}{B(1+i)}.
\end{equation}
Then
\begin{equation}
    |z|^2
    =
    \frac{A^2+C^2}{2B^2}
    =1.
\end{equation}
The calculation
\begin{equation}
    z
    =
    \frac{(A+iC)(1-i)}{2B}
    =
    \frac{A+C}{2B}
    +
    i\frac{C-A}{2B},
\end{equation}
shows that $z$ has rational real and imaginary parts.

Conversely, suppose
\begin{equation}
    z=\frac{p+iq}{r},
    \qquad
    p,q,r\in\Z,
    \qquad
    p^2+q^2=r^2.
\end{equation}
Then the integers
\begin{equation}
    A=p-q,
    \qquad
    C=p+q,
    \qquad
    B=r
\end{equation}
satisfy
\begin{equation}
    A^2+C^2=2B^2.
\end{equation}
After taking absolute values and swapping $A$ and $C$ if necessary, one obtains a
triple of three squares in arithmetic progression.

Thus length-$2$ cascades are equivalent, up to trivial normalizations, to rational
points on the unit circle.
Rational points on the unit circle correspond, after clearing denominators,
to integer solutions of \(a^2+b^2=c^2\); restricting to positive coordinates
gives the usual Pythagorean triples.
[As indicated in the Figure~\ref{fig2} caption, there is also a natural correspondence
between Pythagorean triples
and equal-frequency pairs if we allow the inclusion of an $E = 0$ ionized state.
In summary, a Pythagorean triple naturally gives us two equalities of sums of squares, 
$p^2+q^2=r^2+0^2$ and $(p+q)^2+(p-q)^2=r^2+r^2$.
The former appears in Figure~\ref{fig2} as an example with dots on coordinate axes,
and the latter appears
as an example with a dot on the main diagonal.]

\subsection{Composition law}

The rational points on the unit circle form a group under multiplication of complex numbers.

Suppose two cascades give
\begin{equation}
    z_1=\frac{p_1+iq_1}{r_1},
    \qquad
    z_2=\frac{p_2+iq_2}{r_2},
\end{equation}
with
\begin{equation}
    p_1^2+q_1^2=r_1^2,
    \qquad
    p_2^2+q_2^2=r_2^2.
\end{equation}
Then
\begin{equation}
    z_1z_2
    =
    \frac{(p_1p_2-q_1q_2)+i(p_1q_2+p_2q_1)}{r_1r_2}.
\end{equation}
The numerator again satisfies a Pythagorean identity:
\begin{equation}
    (p_1p_2-q_1q_2)^2+(p_1q_2+p_2q_1)^2
    =
    (r_1r_2)^2.
\end{equation}
This is precisely the Brahmagupta--Fibonacci identity.

As an example, let us compose the smallest cascade with itself.
For the smallest cascade, we may start with
\[
    z=\frac{4-3i}{5}.
\]
Squaring gives
\begin{equation}
    z^2
    =
    \left(\frac{4-3i}{5}\right)^2
    =
    \frac{7-24i}{25}.
\end{equation}
Thus
\[
    p=7,
    \qquad
    q=-24,
    \qquad
    r=25.
\]
Using
\begin{equation}
    A=p-q,
    \qquad
    C=p+q,
    \qquad
    B=r,
\end{equation}
we get
\begin{equation}
    A=31,
    \qquad
    C=-17,
    \qquad
    B=25,
\end{equation}
\begin{equation}
    31^2+17^2=2\cdot25^2.
\end{equation}
Define
\begin{equation}
    L=\lcm(31,25,17)=13175.
\end{equation}
Then
\begin{align}
    x_0&=\frac{13175}{31}=425,\\
    x_1&=\frac{13175}{25}=527,\\
    x_2&=\frac{13175}{17}=775.
\end{align}
Therefore
\begin{equation}
\boxed{
    527\to425
    \qquad\text{and}\qquad
    775\to527
}
\end{equation}
have equal frequency.

Equivalently,
\begin{equation}
    \frac{1}{425^2}-\frac{1}{527^2}
    =
    \frac{1}{527^2}-\frac{1}{775^2}.
\end{equation}

\section{Diophantine equations related to the nonrelativistic Schr\"odinger Equation for hydrogen}
\label{sec_SE}

In this section, we derive some results concerning descriptions of hydrogen that go beyond Bohr's 1913 model.
We will discuss the nonrelativistic Schr\"odinger equation for hydrogen.
For a standard undergraduate treatment of this, including separation in spherical coordinates and the
quantum numbers $n,l,m,$ see Griffiths and Schroeter~\cite{GriffithsSchroeter2018}.
We will also discuss the Langer modification~\cite{Langer1937, Morehead1995, Morehead2005,
MorikawaOgawa2025}.

In classical mechanics, the eccentricity of a Kepler orbit may be
written in terms of the energy \(E\) and angular momentum \(L\) as discussed below.
Thus the eccentricity defines a function on phase space. In the quantum
problem we may use the same expression to define an operator,
\[
\widehat{\epsilon^2}
=
1+\frac{2\hat H \hat L^2}{\mu\kappa^2}.
\]
This definition should not be interpreted as saying that the electron moves
along a definite elliptical path. It is simply the definition of an operator,
motivated by the existence of a classical function. There is no
operator-ordering ambiguity, because \(\hat H\) and \(\hat L^2\) commute.

There is also a closely related alternative based on the Laplace--Runge--Lenz
vector. Classically, the square of this vector is proportional to
\(\epsilon^2\). In the quantum problem, however, the Laplace--Runge--Lenz
operator involves products of noncommuting operators, and its Hermitian
definition requires a choice of ordering\cite{Weinberg2015,SakuraiNapolitano2020,
BanderItzykson1966I,SkinnerPQM,Binney2013physics}. As a result, the square of the
quantum Laplace--Runge--Lenz vector differs from the direct
\(\hat H,\hat L^2\) construction by an additive term of order \(\hbar^2\).
In order to look at this in more detail, let us begin with
the standard Hermitian quantum Runge--Lenz vector,
\[
\hat{\mathbf A}
=
\frac{1}{2\mu}
\left(
\hat{\mathbf p}\times \hat{\mathbf L}
-
\hat{\mathbf L}\times \hat{\mathbf p}
\right)
-
\kappa\,\frac{\hat{\mathbf r}}{r}.
\]
The symmetrization is needed because \(\hat{\mathbf p}\) and
\(\hat{\mathbf L}\) do not commute.
Based on this expression, the following operator identity can be derived:
\[
\hat A^2
=
\kappa^2
+
\frac{2\hat H}{\mu}
\left(
\hat L^2+\hbar^2
\right).
\]
Thus we may define an operator
\[
\widehat{\epsilon_A^2}
=
\frac{\hat A^2}{\kappa^2}
=
1
+
\frac{2\hat H}{\mu \kappa^2}
\left(
\hat L^2+\hbar^2
\right).
\]
Acting on \(|n\ell m\rangle\), this gives
\[
\widehat{\epsilon_A^2}|n\ell m\rangle
=
\left(
1-\frac{\ell(\ell+1)+1}{n^2}
\right)
|n\ell m\rangle ,
\]
and the eigenvalue appears in the table below (see Method~3), 
which gives an overview of what we will do in this section:
\[
\begin{array}{l|c|c|c}
\text{Method}
&
\epsilon^2
&
\epsilon \text{ rational?}
&
a/b \text{ rational?}
\\
\hline
\text{Method 1: direct } \hat H,\hat L^2
&
1-\dfrac{\ell(\ell+1)}{n^2}
&
\text{sometimes rational}
&
\text{only for } \ell = 0
\\[10pt]
\text{Method 2: Langer}
&
1-\dfrac{(\ell+1/2)^2}{n^2}
&
\text{not rational}
&
\text{always rational}
\\[10pt]
\text{Method 3: Runge--Lenz}
&
1-\dfrac{\ell(\ell+1)+1}{n^2}
&
\text{sometimes rational}
&
\text{only for } \ell = 0
\end{array}
\]

\subsection{Eccentricity and axis ratios}

For a classical Kepler orbit, the eccentricity
\(\epsilon\) may be written in terms of the energy \(E\) and angular momentum
\(L\) as
\[
\epsilon^2
=
1+\frac{2EL^2}{\mu\kappa^2},
\]
where \(\mu\) is the reduced mass, and we use a value for \(\kappa\) appropriate for
a hydrogenic problem,
\[
\kappa=\frac{Ze^2}{4\pi\epsilon_0}.
\]
For bound hydrogenic motion,
\[
E_n=-\frac{\mu\kappa^2}{2\hbar^2 n^2}.
\]
Substitution gives
\[
\epsilon^2
=
1-\frac{L^2}{\hbar^2 n^2}.
\]
Thus the numerical properties of the associated eccentricity depend on the
choice of quantum angular-momentum prescription.

First, using the result from the Schr\"odinger Equation
\[
L^2=\ell(\ell+1)\hbar^2,
\]
one obtains
\[
\epsilon^2
=
1-\frac{\ell(\ell+1)}{n^2},
\]
which is shown in the row for Method~1 in the table above.
Hence \(\epsilon\) is rational precisely when
\[
n^2-\ell(\ell+1)=k^2
\]
for some integer \(k\). For example,
\[
n=4,\qquad \ell=3
\]
gives
\[
\epsilon^2
=
1-\frac{3\cdot 4}{4^2}
=
\frac14,
\]
and we get \(\epsilon=1/2\).

Next, using the Langer modification
(Method~2 in the table above)
\[
L=\left(\ell+\frac12\right)\hbar,
\]
we get
\[
\epsilon^2
=
1-\frac{(\ell+1/2)^2}{n^2}
=
\frac{(2n)^2-(2\ell+1)^2}{(2n)^2}.
\]
Thus rationality of \(\epsilon\) would require an integer solution of
\[
(2\ell+1)^2+k^2=(2n)^2.
\]
By looking at this equation modulo \(4\), we can see that no solutions are possible:
the first term is congruent to
\(1\) modulo \(4\), while \(k^2\) is congruent to either \(0\) or \(1\) modulo
\(4\), and the right-hand side is congruent to \(0\) modulo \(4\). Therefore
the Langer modification gives no rational eccentricities.

Next we look at the ratio of the major to minor axes. Let \(a\) and \(b\) denote
the semi-major and semi-minor axes of the classical ellipse. Since
\[
\frac{b}{a}=\sqrt{1-\epsilon^2},
\]
we have
\[
\frac{b}{a}
=
\frac{L}{\hbar n}.
\]
With the Langer modification this gives
\[
\frac{b}{a}
=
\frac{\ell+1/2}{n},
\qquad
\frac{a}{b}
=
\frac{2n}{2\ell+1},
\]
so the major-to-minor axis ratio is always rational.

For Method~1, one instead obtains
\[
\frac{b}{a}
=
\frac{\sqrt{\ell(\ell+1)}}{n}.
\]
For \(\ell\geq 1\), the product \(\ell(\ell+1)\) is not a perfect square,
because it lies strictly between two consecutive perfect squares,
\[
\ell^2 < \ell(\ell+1) < (\ell+1)^2.
\]
Thus Method~1 gives no nondegenerate rational major-to-minor axis ratios.

We reach the same conclusion for Method~3 because
for \(\ell\geq 1\), the expression \(\ell(\ell+1)+1\) is not a perfect square;
it lies strictly between two consecutive perfect squares,
\[
\ell^2 < \ell(\ell+1)+1 < (\ell+1)^2.
\]

Method~3 leads to the factorization problem
\[
(n-k)(n+k)=\ell^2+\ell+1.
\]
First, let us regard the value of \(\ell\) as given and ask for the possible values of \(n\).
Since \(\ell^2+\ell+1\) is odd, any factorization of it into two positive
factors gives integer values for \(n\) and \(k\). One possibility that is always
available (even if \(\ell^2+\ell+1\) is prime) is
\[
n-k=1,
\qquad
n+k=\ell^2+\ell+1,
\]
which results in
\[
n=\frac{\ell^2+\ell+2}{2}.
\]
The calculation 
\[
\frac{\ell^2+\ell+2}{2}-\ell
=
\frac{\ell^2-\ell+2}{2}
=
\frac{\ell(\ell-1)+2}{2}>0.
\]
shows that this \(n\) value is greater
than \(\ell\), so it is a valid \(n\) value.

Next, let us regard the value of \(n\) as given and ask for the possible values of \(\ell\).
We begin with the equation
\[
n^2=k^2 + \ell^2+\ell+1,
\]
and rewrite it as
\[
4n^2-3=(2k)^2 + (2\ell+1)^2.
\]
We see that this is
equivalent to asking whether \(4n^2-3\) is representable as a sum of two
squares. We do not pursue this question here; it belongs naturally to the
number-theoretic study of sums of two squares and values of quadratic
polynomials.

Thus the Runge--Lenz version provides a third natural 
integer problem coming from the eccentricity problem.

\begin{remark}
We do not suggest that rational values of the eccentricity have direct
spectroscopic significance. Rather, 
the study of Diophantine equations is fundamentally related to the
mathematical framework that emerges from quantum mechanics,
and it is of interest to study
the different ways in which the classical Kepler ellipse may be
associated with the quantum numbers \(n\) and \(\ell\). The direct
\(\hat H,\hat L^2\) construction, the Langer modification, and the
Runge--Lenz construction differ only by small changes to integer expressions,
but these changes
lead to rather different Diophantine problems.
\end{remark}

\subsection{Connection with equal sums of two squares}

The condition for rational eccentricity in Method~1 was
\[
n^2-\ell(\ell+1)=k^2.
\]
This may be rewritten as
\[
4n^2-4k^2=4\ell(\ell+1)=(2\ell+1)^2-1.
\]
Thus
\[
(2\ell+1)^2+(2k)^2=(2n)^2+1^2.
\]
This is an equality of two sums of two squares (for a thorough analysis
of this equation see the literature\cite{Frink1987AlmostPythagoreanTriples,
AntalanTomenes2015AlmostPythagoreanTriples} on ``almost Pythagorean triples,'').
Thus the rational-eccentricity problem is a special ``slice'' of the
equal-sums-of-two-squares problem considered earlier in this paper.

A useful construction comes from the Brahmagupta--Fibonacci identity
\[
(ac+bd)^2+(ad-bc)^2=(ac-bd)^2+(ad+bc)^2.
\]
If we impose the additional condition
\[
ac-bd=1,
\]
then the identity becomes
\[
(ac+bd)^2+(ad-bc)^2=1^2+(ad+bc)^2.
\]
Comparing this with
\[
(2\ell+1)^2+(2k)^2=1^2+(2n)^2,
\]
we may set
\[
2\ell+1=ac+bd,\qquad 2k=ad-bc,\qquad 2n=ad+bc.
\]
Since \(ac-bd=1\), we have
\[
ac=bd+1,
\]
and therefore
\[
ac+bd=2bd+1.
\]
Thus
\[
\ell=bd,
\]
and the corresponding values of \(n\) and \(k\) are
\[
n=\frac{ad+bc}{2},\qquad k=\frac{ad-bc}{2}.
\]
In summary, we have
\[
\boxed{
\ell=bd,\qquad
n=\frac{ad+bc}{2},\qquad
k=\frac{ad-bc}{2},
\qquad
ac-bd=1.
}
\]
The parity condition that \(ad\) and \(bc\) have the same parity ensures that
\(n\) and \(k\) are integers.

It is possible to construct simple polynomial families of solutions. For example, choose
\[
a=4uv+1,\qquad b=2u,\qquad c=1,\qquad d=2v.
\]
Then
\[
ac-bd=(4uv+1)-4uv=1,
\]
and the construction gives
\[
\boxed{
\ell=4uv,\qquad
n=v(4uv+1)+u,\qquad
k=v(4uv+1)-u
}
\]

There is also a companion family. Choose
\[
a=2v,\qquad b=1,\qquad c=2u,\qquad d=4uv-1.
\]
Then
\[
ac-bd=4uv-(4uv-1)=1,
\]
and the construction gives
\[
\boxed{
\ell=4uv-1,\qquad
n=v(4uv-1)+u,\qquad
k=v(4uv-1)-u
}
\]
is another two-parameter family. The small example
\[
n=4,\qquad \ell=3,\qquad k=2
\]
is obtained by taking \(u=v=1\).

\begin{acknowledgments}
The author acknowledges the use of ChatGPT, developed by OpenAI, for
exploratory mathematical discussion, literature-search suggestions, and
assistance in organizing the exposition. ChatGPT was not used as an
authoritative source; all mathematical arguments, calculations, references,
and conclusions were independently checked by the author.
\end{acknowledgments}

\newpage

\vspace{6pt} 

\bibliography{main}
\bibliographystyle{apsrev4-2}

\end{document}